\newcommand{\ie}{{\it i.e.} }
\newcommand{\re}{\mathbb{R}}
\newcommand{\rep}{\mathbb{R}_+}
\newcommand{\Pp}{\mathcal{P}}
\newcommand{\X}{\mathcal{X}}
\newtheorem{theorem}{Theorem}
\newtheorem{lemma}[theorem]{Lemma}
\newtheorem{proposition}[theorem]{Proposition}
\begin{document}
\title{
A Note on the Identifiability of Generalized Linear Mixed Models
}
\author{Rodrigo Labouriau
	    \thanks{
	      Department of Molecular Biology and Genetics, 
	      Aarhus University}}
\date{May 2014}
\maketitle

\begin{abstract}
I present here a simple proof that, under general regularity conditions, the standard parametrization of generalized linear mixed model is identifiable. The proof is based on the assumptions of generalized linear mixed models on the first and second order moments and some general mild regularity conditions, and, therefore, is extensible to quasi-likelihood based generalized linear models. In particular, binomial and Poisson mixed models with dispersion parameter are identifiable when equipped with the standard parametrization.
\end{abstract}

%
%
\newpage

\section{Introduction}

Here I consider the problem of establishing the identifiability of the standard parametrisation of a generalized linear mixed mode (GLMM). 
The concept of identifiability that I use here is the following (see Bickel and Dokuson, 1977, page 60, Lehman 1983 and J{\o}rgesen and Labouriau 2012).
Consider a statistical model 
\begin{eqnarray} \label{Eq00}
 \Pp = \left \{ P_\theta : \theta \in \Theta \right \}
\end{eqnarray}
parametrised by $\theta $, \ie $\Pp$ is a family of probability distributions defined in the same measurable space indexed by the parameter $\theta$. 
The parametrisation used in (\ref{Eq00}) is {\it identifiable} when the mapping $\theta \mapsto P_\theta$ is one-to-one, \ie 
\begin{eqnarray} \nonumber
 \theta_1 \ne \theta_2 
 \Longrightarrow 
 P_{\theta_1} \ne  P_{\theta_2}  \, ,
\end{eqnarray}
for each $ \theta_1$ and $ \theta_2$ in $\Theta$.

Without loss of generality we consider a GLMM with one single random component, say $U$ and one fixed effect $x$. Denote by $Y$ the random variable representing the response for one observation. I assume that conditionally on $U$, $Y$ is distributed according to an exponential dispersion model and that 
\begin{eqnarray} \label{Eq001}
 E \left ( Y \vert U=u \right ) = h  \left ( x\beta + u  \right ) \,\, ,
\end{eqnarray}
where $\beta $ is a parameter and $h $ is the response function (\ie the inverse of the link function), which is assumed to be monotone (\ie increasing or decreasing) and smooth. The random component is assumed to be continuously distributed with expectation $0$ and variance $\sigma^2$ (typically assumed to be normally distributed) and have a probability density of the form $\phi (\, \cdot \, / \sigma)$.
Futhermore, I assume that 
\begin{eqnarray} \label{Eq002}
 Var \left ( Y \vert U=u \right ) = \xi \, V  \left [   h  \left ( x\beta + u  \right ) \right )] \,\, ,
\end{eqnarray}
where $\xi$ is the dispersion parameter and $V$ is the variance function, which associates the mean to the variance (see J{\o}rgensen {\it et al}, 1996, Breslow and Clayton, 1983).

The GLMM referred above is then parametrised by $\theta = (\beta, \sigma^2, \xi )$. I will show that under mild regularity conditions this parametrisation is identifiable. That is, denoting  the distribution of $Y$ when the parameter takes the value $\theta = (\beta, \sigma^2, \xi )$ by $P_\theta = P_{\beta, \sigma^2, \xi}$, we will show that, if 
$\theta_1 = (\beta_1, \sigma^2_1, \xi_1 ) \ne \theta_2 = (\beta_2, \sigma^2_2, \xi_2 ) $, then $P_{\theta_1} \ne P_{\theta_2}$. The proof will be established using two propositions. First I will show in proposition 1 that if $\beta_1 \ne \beta_2$ or $\sigma^2_1 \ne \sigma^2_2$ (no matter the values of $\xi_1$ and $\xi_2$) then $P_{\theta_1} \ne P_{\theta_2}$. This proof will use the property  (\ref{Eq001}) and the fact that if two distributions have different expectation, then they are not equal. Next I will show in proposition 2 that if $\xi_1 \ne \xi_2$, 
then $P_{\beta, \sigma^2, \xi_1} \ne P_{\beta, \sigma^2, \xi_2}$ for any values of $\beta$ and $\sigma$, which will complete the proof. I will call that the part 2 of the proof. The proof will use the relation (\ref{Eq002}) and the fact that if the variance of two distributions are different, then the distributions are different.

\section{Preparatory basic calculations}

Before embracing the proof, I calculate the expectation and  the variance of $Y$. These calculations will use implicitly the following two regularity conditions. For all $k\in\re$ and each $\sigma\in\rep$,
\begin{description}
\item {i)}
 The integral $\int \left [ h (  k + \sigma z )\right ]^2  \phi ( z ) dz$ is finite
 \item {ii)}
 The integral $\int V \left [ h (  k + \sigma z )\right ]  \phi ( z ) dz$ is finite.
\end{description}
Here $\phi $ is the density of the distribution of the random component (typically assumed to be normally distributed) and the integration is in the support of the distribution of the random component, which will be assumed to be $\re$.
Under condition i) the expectation of $Y$ is given by
\begin{eqnarray}  \nonumber 
E(Y) = E \left [ E(Y \vert U ) \right ] 
        & = & \int_\re h ( x\beta + u) \, \phi (u/\sigma ) \, du 
        \\ \nonumber 
        & & \mbox{(making a change of variable)}
        \\ \label{Eq0003}
        & = & 1/\sigma \int_\re h ( x\beta + \sigma z ) \, \phi (z ) \, dz 
        = \mu_{\beta, \sigma} .
\end{eqnarray}
On the other hand, the variance of $Y$ is given by
\begin{eqnarray} \label{eq01}
 Var (Y) = Var \left [ E(Y\vert U) \right ] + E  \left [ Var(Y\vert U) \right ]
 \, .
\end{eqnarray} 
Now, 
\begin{eqnarray} \nonumber
  Var\left [ E(Y\vert U) \right ] & = & Var  \left [ h(x\beta + U ) \right ] 
  \\ \nonumber & = &
  \int_\re h^2 (  x\beta + u )  \phi ( u/\sigma ) du 
  - \left [  \int_\re h( x\beta + u )  \phi ( u/\sigma ) du \right  ]^2
  \\ \label{eq02} & = &
  \zeta_{\beta, \sigma} - \mu_{\beta, \sigma}^2 
  \, ,
\end{eqnarray} 
where $\zeta_{\beta, \sigma} =  \int_\re h^2 (  x\beta + u )  \phi ( u/\sigma ) du $. Furthermore,
\begin{eqnarray} \label{eq03}
 E \left [ Var (Y\vert U) \right ] & = & 
 \xi \int_\re  \, V  \left [   h  \left ( x\beta + u  \right ) \right ]\phi ( u/\sigma ) du 
 = \xi \upsilon_{\beta , \sigma} \, ,
\end{eqnarray} 
where $ \upsilon_{\beta , \sigma}  = \int_\re  \, V  \left [   h  \left ( x\beta + u  \right ) \right ]\phi ( u/\sigma ) du $. Inserting (\ref{eq02}) and (\ref{eq03}) in (\ref{eq01}) yields
\begin{eqnarray} \label{eq04}
 Var (Y) =   \zeta_{\beta, \sigma} - \mu_{\beta, \sigma}^2 + \xi \upsilon_{\beta , \sigma} \, . 
\end{eqnarray} 

\section{Proof of the indentifiability}

\subsection{Identifiability of the fixed and random effect parameters}

\begin{proposition}
For each $\theta_1 = (\beta_1, \sigma^2_1, \xi_1)$ and $\theta_2 = (\beta_2, \sigma^2_2, \xi_2)$ such that $\beta_1 \ne \beta_2$ or $\sigma^2_1 \ne \sigma^2_2$, $P_{\theta_1} \ne P_{\theta_2}$, provided the following two conditions are fulfilled:
\begin{description}
 \item{1)} 
 The integral $\int  h (  k + \sigma z )  \phi ( z ) dz$ is finite
 \item{2)} 
 The explanatory variable $x$ takes values in the set $\X$ and the equation
\begin{eqnarray} \label{Eqq09}
  \frac{h(x\beta_1+\sigma_1z)}{\sigma_1} = 
  \frac{h(x\beta_2+\sigma_2z)}{\sigma_2}
  \,\,
  \mbox{ for all } x \mbox{ in } \X \mbox{ and all }
  z \mbox{ in } \re
\end{eqnarray}
has no solution.
\end{description}
\end{proposition}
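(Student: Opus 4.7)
The plan is to argue by contrapositive: assume $P_{\theta_1}=P_{\theta_2}$ and show that then $\beta_1=\beta_2$ and $\sigma_1^2=\sigma_2^2$, regardless of the values of $\xi_1$ and $\xi_2$. Since equality of distributions forces equality of first moments at every design point, and since condition (1) ensures that the representation (\ref{Eq0003}) of $\mu_{\beta,\sigma}$ is well defined, I would first deduce that
\[
\int_{\re}\left[\frac{h(x\beta_1+\sigma_1 z)}{\sigma_1}-\frac{h(x\beta_2+\sigma_2 z)}{\sigma_2}\right]\phi(z)\,dz \;=\;0
\]
for every $x\in\X$. The aim is then to upgrade this to the pointwise identity appearing in (\ref{Eqq09}), because by condition (2) no such identity can hold unless $(\beta_1,\sigma_1)=(\beta_2,\sigma_2)$; applying that condition will then finish the contrapositive.

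The heart of the argument is therefore the upgrade from a family of vanishing integrals, indexed by $x\in\X$, to the pointwise vanishing of the integrand in $(x,z)$. The natural route is to exploit the smoothness and strict monotonicity of $h$: differentiating the integral identity successively in $x$ produces a sequence of linear constraints on the bracketed expression against $\phi(z)\,dz$, and when $\X$ is sufficiently rich (containing, say, an open interval, which appears implicit in the formulation) these constraints force the bracketed expression to vanish $\phi$-almost everywhere as a function of $z$. Continuity of $h$ then promotes this to pointwise equality in $z$ for every $x$ in that open subset, and monotonicity of $h$ can be used to extend the equality to all $x\in\X$.

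The main obstacle I anticipate is precisely this upgrading step, since equality of integrals for every $x\in\X$ does not in general imply equality of the integrands, whereas condition (2) is stated as a pointwise hypothesis on $h$. Closing this gap requires either a richness assumption on $\X$ combined with analyticity-type properties of $h$, or a completeness property of the mixing density $\phi$---ingredients the paper appears to use only implicitly. The remaining steps---reducing distributional equality to equality of expectations and unwinding the definition of $\mu_{\beta,\sigma}$---are essentially bookkeeping once the passage from integral to pointwise equality is secured.
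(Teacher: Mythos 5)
You have correctly identified the crux, but you have not crossed it: your argument ends by observing that the passage from the integral identity $\int\left[h(x\beta_1+\sigma_1 z)/\sigma_1-h(x\beta_2+\sigma_2 z)/\sigma_2\right]\phi(z)\,dz=0$ for every $x\in\X$ to the pointwise identity (\ref{Eqq09}) ``requires either a richness assumption on $\X$ \ldots or a completeness property of $\phi$'', i.e.\ the central step is left as an acknowledged gap rather than proved. Moreover, the route you sketch (repeated differentiation in $x$, $\X$ containing an open interval, analyticity-type properties of $h$) imports hypotheses that are not part of the proposition: condition 2) is formulated so that it applies to an arbitrary, possibly finite, design set $\X$ (the paper's Poisson illustration only needs $\X$ to contain two points), so an argument requiring $\X$ open or $h$ analytic would prove a different statement. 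And for a fixed finite $\X$, equality of the $\phi$-weighted means alone genuinely does not force the integrands to agree, so without some further mechanism the contrapositive cannot be completed from the expectation of $Y$ alone.

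The paper closes this step by a different mechanism: from $P_{\theta_1}=P_{\theta_2}$ it asserts not merely equality of the expectations but equality of the corresponding integrals restricted to an arbitrary interval $I$ in the $z$-variable (equation (\ref{EqQQQ}), read as equality of conditional expectations given that the random component falls in $I$), and then invokes the technical Lemma \ref{TLemma} (two continuous integrable functions with equal integrals over every interval coincide pointwise) to obtain (\ref{Eqq09}) for each fixed $x$ separately, hence for all $x\in\X$, contradicting condition 2). This localization works per $x$ and needs no richness of $\X$, no differentiation in $x$, and no completeness of $\phi$. (Whether that localization is itself fully justified---the random component is unobserved, so equality of the marginal laws of $Y$ does not immediately yield equality of these interval-restricted integrals---is a separate criticism of the paper; but your proposal supplies no substitute for that step, so as written it is not a complete proof.)
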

\noindent
Note that in the lemma above nothing is said about the values of $\xi_1$ and $\xi_2$.
\begin{proof}
Suppose, by hypothesis of absurdum, that there exist $\beta_1, \beta_2$, and $\sigma^2_1, \sigma^2_2$ such that $P_{\beta_1, \sigma^2_1, \xi_1} = P_{\beta_2, \sigma^2_2, \xi_2}$. In particular, we have that the expectation of $Y$ is equal for the distributions indexed by the two parameters, so that $\mu_{\beta_1, \sigma_1}  = \mu_{\beta_2, \sigma_2}$. That is, 
\begin{eqnarray} \nonumber
   \int_\re h ( x\beta_1 + \sigma_1 )/\sigma_1 \phi (z ) dz  = 
   \int_\re h ( x\beta_1 + \sigma_2 ) /\sigma_2 \phi (z ) dz  \, ,
   \mbox{ for all } x \mbox{ in } \X \, .
\end{eqnarray}  
Moreover, the equality above holds also for any interval $I$ in $\re$
\begin{eqnarray} \label{EqQQQ}
   \int_I h ( x\beta_1 + \sigma_1 )/\sigma_1 \phi (z ) dz  = 
   \int_I h ( x\beta_1 + \sigma_2 ) /\sigma_2 \phi (z ) dz  \, ,
   \mbox{ for all } x \mbox{ in } \X \, ,
\end{eqnarray} 
which corresponds to say that the two distributions have the same conditional expectations given the result are in the interval $I$.
Since the interval $I$ is arbitrary and $h$ and $\phi$ are continuous,
the technical lemma \ref{TLemma} implies that
\begin{eqnarray} \nonumber
  \frac{h(x\beta_1+\sigma_1z)}{\sigma_1} = 
  \frac{h(x\beta_2+\sigma_2z)}{\sigma_2}
  \,\,
  \mbox{ for all } x \mbox{ in } \X \mbox{ and all }
  z \mbox{ in } \re \, ,
\end{eqnarray} 
which contradicts the hypothesis 2).
\end{proof}

\subsection{Identifiability of the dispersion parameter}
\begin{proposition}
Under the regularity conditions i) and ii), 
$P_{\theta_1} \ne P_{\theta_2}$, for each $\theta_1 = (\beta, \sigma^2, \xi_1)$ and $\theta_2 = (\beta, \sigma^2, \xi_2)$ such that $\xi_1 \ne \xi_2$. 
\end{proposition}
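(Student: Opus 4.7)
The natural plan is to mirror the strategy used in Proposition 1, but with the variance playing the role of the expectation. Suppose, for contradiction, that $P_{\theta_1} = P_{\theta_2}$ with $\theta_j = (\beta,\sigma^2,\xi_j)$ and $\xi_1 \ne \xi_2$. If the two distributions coincide, then in particular their variances coincide, so $\mathrm{Var}_{\theta_1}(Y) = \mathrm{Var}_{\theta_2}(Y)$.

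Now I would apply the variance decomposition already derived in (\ref{eq04}). Since $\beta$ and $\sigma^2$ are held fixed, the quantities $\mu_{\beta,\sigma}$, $\zeta_{\beta,\sigma}$ and $\upsilon_{\beta,\sigma}$ depend only on $(\beta,\sigma)$ and are therefore identical under $\theta_1$ and $\theta_2$. Substituting both parameter values into (\ref{eq04}) and subtracting gives
\begin{eqnarray} \nonumber
 (\xi_1 - \xi_2)\, \upsilon_{\beta,\sigma} \;=\; 0 .
\end{eqnarray}
Since $\xi_1 \ne \xi_2$ by assumption, this forces $\upsilon_{\beta,\sigma} = 0$, which is the contradiction I want.

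The key remaining step, and the only real obstacle, is to justify that $\upsilon_{\beta,\sigma} > 0$, so that the display above is actually a contradiction. Recall
\begin{eqnarray} \nonumber
 \upsilon_{\beta,\sigma} \;=\; \int_{\re} V\bigl[h(x\beta + u)\bigr]\,\phi(u/\sigma)\,du .
\end{eqnarray}
The integrand is non-negative because $V$ is a variance function, and $\phi$ is strictly positive on $\re$. Hence $\upsilon_{\beta,\sigma} = 0$ can occur only if $V\bigl[h(x\beta+u)\bigr] = 0$ almost everywhere in $u$, which would correspond to a degenerate conditional distribution of $Y$ given $U$. This would have to be ruled out by a mild additional regularity assumption, namely that the variance function $V$ does not vanish identically on the range of $h$ (which is automatic for the standard cases, e.g.\ Poisson with $V(\mu)=\mu$ on $\mu>0$ and binomial with $V(\mu)=\mu(1-\mu)$ on $\mu\in(0,1)$, together with a response function $h$ whose image lies in the interior of the mean domain). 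Under this mild non-degeneracy, $\upsilon_{\beta,\sigma} > 0$ and the contradiction closes the proof.
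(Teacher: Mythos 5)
Your proposal follows essentially the same route as the paper: equate the variances under the two parameter values, apply the decomposition (\ref{eq04}), cancel the common terms $\zeta_{\beta,\sigma} - \mu_{\beta,\sigma}^2$, and derive the contradiction from $\upsilon_{\beta,\sigma} \ne 0$. In fact you are slightly more careful than the paper, which simply asserts $\upsilon_{\beta,\sigma} \ne 0$, whereas you spell out the positivity of the integrand and the mild non-degeneracy condition on the variance function $V$ needed to guarantee it.
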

\begin{proof}
Suppose, by hypothesis of absurdum, that there exist $\beta, \sigma^2, \xi_1$ and $\xi_2$ with $\xi_1\ne \xi_2$ such that $P_{\beta, \sigma^2, \xi_1} = P_{\beta, \sigma^2, \xi_2}$. Then, $Var(Y)$ under $P_{\beta, \sigma^2, \xi_1}$ must be equal to $Var(Y)$ under $P_{\beta, \sigma^2, \xi_2}$, which implies by (\ref{eq04}) that
\begin{eqnarray} \nonumber
  \zeta_{\beta, \sigma} - \mu_{\beta, \sigma}^2 + \xi_1 \upsilon_{\beta , \sigma}
  = 
   \zeta_{\beta, \sigma} - \mu_{\beta, \sigma}^2 + \xi_2 \upsilon_{\beta , \sigma} \, ,
\end{eqnarray} 
wich is equivalent to
\begin{eqnarray} \nonumber
   \xi_1 \upsilon_{\beta , \sigma}
  = 
   \xi_2 \upsilon_{\beta , \sigma} \, ,
\end{eqnarray} 
which implies that $\xi_1 = \xi_2$ since $ \upsilon_{\beta , \sigma} \ne 0$. 
But this contradicts the hypothesis of absurdum and therefore the proof is concluded.
\end{proof}

\section{Closing remarks} 

Although the proof above is general there are two particular examples that are of interest in many practical situations: the binomial and the Poisson GLMM. In those cases the dispersion parameter $\xi$ will represent the parameter used for modelling under- and over-dispersion via quasi-likelihood. When the dispersion parameter $\xi$ is equal to one the GLMM corresponds to the classical binomial and Poisson standard models; and in this case, the identifiability follows from the proposition 1. 

For the sake of illustration here is the verification of the identifiability of the standard parametrisation of a Poisson model with the logarithmic link, \ie the response function is $h(\, \cdot \,  ) = \exp (\, \cdot \, )$. Equation (\ref{Eqq09}) in this particular example is 
\begin{eqnarray} \nonumber
 \frac{\exp(x\beta_1 + \sigma_1 z)}{\sigma_1} =
 \frac{\exp(x\beta_2 + \sigma_2 z)}{\sigma_2}
 \,\,
  \mbox{, for all } x \mbox{ in } \X \mbox{ and all }
  z \mbox{ in } \re \,\, ,
\end{eqnarray}
which is equivalent to (taking logarithms in both sides) 
\begin{eqnarray} \nonumber
 x\beta_1 + \sigma_1 (z-1) =
  x\beta_2 + \sigma_2 (z-1)
 \,\,
  \mbox{, for all } x \mbox{ in } \X \mbox{ and all }
  z \mbox{ in } \re \,\, .
\end{eqnarray} 
But the equation above has no solution if $\X$ has more than one element. Analogous arguments yields the identifiability of a binomial model with the classic logistic link or the probit link.

The techniques for establishing identifiability used here can be easily applied, after a straightforward adaptation, to a context of survival analysis for the piecewise constant hazard model or the discrete time proportional hazard model, both with frailties and dispersion parameter (see Maia {\it et al}, 2014 for details on those models).

\section*{Acknowledgements}

This work was partially financed financed by the  {\it Funda\c c\~ao Apolodoro Plaus\^onio}.

\newpage

\newpage

\appendix

\section{A trivial technical lemma}

I prove below a trivial technical lemma from basic calculus, which must be generally known, but I include here for completeness of this note.
\begin{lemma}\label{TLemma}
 If $f$ and $g$ are two continuous and integrable real functions such that for each interval $I$ in $\re$,
 \begin{eqnarray}\label{EqLLL}
  \int_I f(x) dx =  \int_I g(x) dx \, ,
\end{eqnarray}
then $f(x) = g(x)$ for all $x\in \re$.
\end{lemma}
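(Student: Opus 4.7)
The plan is to argue by contradiction using the continuity of $f - g$. Set $h := f - g$; then $h$ is continuous, integrable, and by hypothesis $\int_I h(x)\,dx = 0$ for every interval $I\subset\re$. The goal is to show $h\equiv 0$.

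Suppose, aiming for a contradiction, that $h(x_0)\neq 0$ at some $x_0\in\re$. Replacing $h$ by $-h$ if necessary, I may assume $h(x_0)>0$. By continuity of $h$ at $x_0$, applied with $\varepsilon = h(x_0)/2$, there exists $\delta>0$ such that $|h(x)-h(x_0)|<h(x_0)/2$ whenever $x\in(x_0-\delta,x_0+\delta)$, and in particular $h(x)>h(x_0)/2$ on that interval. Choosing $I = (x_0-\delta,x_0+\delta)$ then gives
\[
\int_I h(x)\,dx \;\geq\; 2\delta\cdot\frac{h(x_0)}{2} \;=\; \delta\, h(x_0) \;>\; 0,
\]
contradicting the assumption that every such integral vanishes. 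Hence $h(x)=0$ for every $x\in\re$, i.e.\ $f\equiv g$.

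There is essentially no obstacle: the only point requiring any care is the extraction of a neighbourhood on which $h$ is uniformly bounded below by a strictly positive constant, which is immediate from continuity. If one wished to avoid the proof by contradiction, an equivalent route is to fix $a\in\re$, define $F(x)=\int_a^x f(t)\,dt$ and $G(x)=\int_a^x g(t)\,dt$, observe that (\ref{EqLLL}) implies $F\equiv G$ on $\re$, and invoke the Fundamental Theorem of Calculus (valid because $f$ and $g$ are continuous) to conclude $f=F'=G'=g$. Both arguments rely in an essential way on continuity; the statement would otherwise only hold almost everywhere and would require Lebesgue's differentiation theorem.
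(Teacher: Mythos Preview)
Your proof is correct and follows essentially the same contradiction argument as the paper: assume $f\neq g$ at a point, use continuity to obtain an interval on which the strict inequality persists, and derive a contradiction with the equality of integrals. You simply spell out the $\varepsilon$--$\delta$ details and add a remark about the alternative via the Fundamental Theorem of Calculus, but the underlying idea is identical.
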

\begin{proof}
Suppose, by hypothesis of absurdum, that there is a $y\in\re$ such that $f(y)\ne g(y)$, say $f(y)<g(y)$. Then, from the continuity of $f$ and $g$, there exist an interval $I$ such that $f(x)<g(x)$ for all $x\in I$. But then $\int_I f(x) dx < \int_I g(x) dx$, which contradicts (\ref{EqLLL}).
\end{proof}


\begin{thebibliography}{99}
\bibitem{}	
Bickel, P. J. and Doksum, K. A. (1977). {\em Mathematical Statistics, Basic Ideas and Selected Topics}. Prentice and Hall, New Jersey.

\bibitem{}								
Breslow, N.E. and Clayton D.G. (1993). Approximate inference in generalized linear mixed models,
{\em Journal of the American Statistical Association}, {\bf 88}(421), pp. 9--25. 

\bibitem{}
J\o rgensen,B. Labouriau,R. and Lundbye-Christensen,S.
(1996). Linear growth curve analysis based on exponential
dispersion models.
{\it J. Roy. Statist. Soc. Ser. B}
{\bf 58 }, 573-592.

\bibitem{}
J\o rgensen, B., and Labouriau, R. S. (2012). {\em Exponential Families and Theoretical Inference}. (2 ed.). Springer (Monograf\' ias de Matem\' atica; No. Zbl 0968.62500, Vol. {\bf 52}), Rio de Janeiro, Brazil. \\
( {\verb http://pure.au.dk/portal/files/51499534/Mon\_52.pdf } )

\bibitem{}
Lehmann, E.L. (1983). {\em Theory of Point Estimation}. John Wiley and Sons, New York.

\bibitem{}
Maia, R.P., Madsen P. and Labouriau, R. (2014). Multivariate Survival Mixed Models for Genetic Analysis of Longevity Traits'. {\em Journal of Applied Statistics}, vol {\bf 41}, no. 6, pp. 1286-1306., doi 10.1080/02664763.2013.868416 .
								
\end{thebibliography}
\end{document}